\newtheorem{theorem}{Theorem}
\begin{document}

\title{Analyzing Brain Structural Connectivity as Continuous Random Functions}
\author{William Consagra \and
Martin Cole \and 
Zhengwu Zhang}

\author{William Consagra\\
  Department of Biostatistics and Computational Biology \\
  University of Rochester, Rochester, NY, U.S.A. \vspace{.2cm} \\
  Martin Cole \\
  Department of Biostatistics and Computational Biology \\
  University of Rochester, Rochester, NY, U.S.A. \vspace{.2cm} \\
  Zhengwu Zhang \\
  Department of Statistics and Opertaions Research \\ 
  UNC Chapel Hill, Chapel Hill, NC, U.S.A.
}

\maketitle

\begin{abstract}
This work considers a continuous framework to characterize the population-level variability of structural connectivity. Our framework assumes the observed white matter fiber tract endpoints are driven by a latent random function defined over a product manifold domain. To overcome the computational challenges of analyzing such complex latent functions, we develop an efficient algorithm to construct a data-driven reduced-rank function space to represent the latent continuous connectivity. Using real data from the Human Connectome Project, we show that our method outperforms state-of-the-art approaches applied to the traditional atlas-based structural connectivity matrices on connectivity analysis tasks of interest. We also demonstrate how our method can be used to identify localized regions and connectivity patterns on the cortical surface associated with significant group differences. Code will be made available at \href{https://github.com/sbci-brain}{https://github.com/sbci-brain}.
\end{abstract}
\noindent%
{\it Keywords:} point process; functional data; continuous connectomics
\vfill

\section{Introduction}
The structural connectivity (SC) describes connectivity patterns between regions of the brain generated by the white matter (WM) fiber tracking results from diffusion-weighted MRI (dMRI). Understanding the heterogeneity of SC across individuals and its relationship to various traits of interest is of fundamental importance for understanding the brain. The most common approach for analyzing SC data is to represent it as a symmetric adjacency matrix, i.e., a network, denoted $\boldsymbol{A}$, referred to as the (structural) connectome matrix. To obtain $\boldsymbol{A}$, the brain surface is discretized into $V$ disjoint regions of interest (ROIs) using some predefined parcellation \citep{desikan2006,destrieux2010}. The element $\boldsymbol{A}_{ab}$ quantifies the extent of structural connectivity between ROIs $a$ and $b$. Joint analysis of a sample of SC from multiple subjects can be performed utilizing network analysis models. For example, in \citep{durante2017nonparametric,zhang2019,arroyo2021b}, the proposed techniques learn a shared latent space characterizing the variability of a sample of networks. Subsequent statistical analysis and inference to relate this variability to cognitive and psychiatric traits of interest can then be performed in the latent space.
%\citep{durante2017nonparametric,wang2019,zhang2019,arroyo2020,arroyo2021,arroyo2021b}
\par 
The dependence on the pre-specification of an atlas in the discrete ROI-based framework of analyzing SC is problematic for at least two major reasons. First, analyses are known to be sensitive to the choice of atlas \citep{zalesky2010} and therefore studies can draw different conclusions for different parcellation schemes of the same data. Second, the ROIs can be large and thus introduce information loss, since fine-grained connectivity information on the sub-ROI level is aggregated in the construction of the adjacency matrix. Studies indicate that connectivity information at higher resolutions can capture more valuable information for many traits of interest \citep{mansour2021}. 
\par 
A series of recent works \citep{moyer2017,cole2021,mansour2022} have addressed these problems by transitioning from the discrete, ROI-based representation of brain connectivity to a fully continuous model. These works assume the observed pattern of WM fiber tract endpoints is driven by an unobserved continuous function, referred to as the \textit{continuous connectivity}. This function is defined on the product space of the cortical surface with itself and governs the strength of connectivity between any pair of points on the surface. Critically, the continuous model of connectivity does not depend on the pre-specification of an atlas and therefore avoids the previously outlined issues which plague traditional discrete network based approaches. Despite these advantages, the current approaches suffer from significant computational hurdles for multi-subject analysis. This is because the continuous connectivity is represented as an extremely high-dimensional matrix, obtained through discretization over a high-resolution grid on the cortical surface. For instance, the popular surface mesh from \cite{vanessen2012} results in matrices of dimension $\approx 64,000\times 64,000$. Applying the existing joint network analysis approaches to a sample of matrices of this size is prohibitive. 
\par 
In this work, we extend the continuous connectivity framework to explicitly model the population variability of the SC by considering a random intensity function which governs the distribution of the observed WM fiber tract endpoints. Utilizing techniques from functional data analysis (FDA), we develop a novel methodology and accompanying estimation algorithm which learns a data-adaptive reduced-rank function space for parsimonious and efficient data representation, allowing us to overcome the enormous computational challenges involved in analyzing a sample of super high-dimensional connectivity matrices. We apply the proposed method to a sample of Human Connectome Project (HCP) subjects and show that our method outperforms traditional atlas-based approaches on tasks related to identifying group differences in SC associated with cognitive and psychiatric traits.

\section{Methodology}

\textbf{A Model for Continuous Connectivity}: For a sample of $i=1, ..., N$ subjects, let $O_{i}$ be endpoints of the WM fiber tracts connecting cortical surfaces for subject $i$. Let $\mathbb{S}^2_1$ and $\mathbb{S}^2_2$ be two independent copies of the 2-sphere and let $\Omega = \mathbb{S}^2_1 \cup \mathbb{S}^2_1$. Since the the left and right cortical surfaces are diffeomorphic to a 2-sphere, $O_{i}$ can be modeled as a point process on $\Omega\times\Omega$ \citep{moyer2017}. Define the intensity function $U_i: \Omega \times \Omega \mapsto [0,\infty)$, such that $U_i$ is symmetric and $L^2$, i.e. $U_i(\omega_1,\omega_2) = U_i(\omega_2,\omega_1)$ and $\int_\Omega\int_\Omega U_i^2<\infty$. For any two distinct/non-overlapping and Borel measurable regions $E_1 \subset \Omega$ and $E_2 \subset \Omega$, denote $N(E_1,E_2)$ as the counting process of the number of WM fiber tracts ending in $(E_1,E_2)$. Then $U_i$ satisfies 
$\mathbb{E}\left[N(E_1,E_2)\right]=\int_{E_1} \int_{E_2} U_i(\omega_1,\omega_2)d\omega_1d\omega_2 < \infty.$
Said another way, $U_i$ determines, up to first-order moment, the pattern of fiber tract endpoints on the cortical surface. For any pair of points $(\omega_1,\omega_2)\in \Omega\times\Omega$, subject-level estimates of $U_i(\omega_1,\omega_2)$ can be formed from $O_i$ using the point-wise product heat kernel density estimator (KDE) proposed in \cite{moyer2017}.
\par 
In order to properly account for the population variability of the SC, we model the sample of $N$ point patterns using the theory of doubly stochastic point-processes. In this framework, we assume that intensity function $U_i$ governing $O_i$ is itself a realization of an underlying \textit{random function}, which we denote as $U$. We analyze the variability in the SC using the sample of continuous connectivity  $\{U_1,...,U_N\}$.
\par\bigskip
\noindent{\textbf{Reduced-Rank Embedding of a Sample of Continuous Connectivity}:  Due to the infinite dimensionality of the continuous connectivity, we must construct an efficient representation before conducting meaningful statistical analysis. In this section, we propose a data-adaptive reduced-rank space for the joint embedding of a sample of continuous connectivity.}
\par 
Since we are interested in characterizing the variability of the SC, without loss of generality, assume that $U$ has been centered, i.e. $\mathbb{E}[U]=0$. Define the symmetric separable orthogonal function set of rank $K$:
$$
\mathcal{V}_{K} = \{\xi_k\otimes\xi_k: \xi_k \in L^2(\Omega), \langle \xi_k, \xi_j\rangle_{L^{2}(\Omega)} = \delta_{kj},\text{ for }k = 1,2,...,K\},
$$
where $\xi\otimes\xi(\omega_1,\omega_2) := \xi(\omega_1)\xi(\omega_2)$ and $\delta_{ik}$ is the Kronecker delta. The $i$'th continuous connectivity can be written as a linear combination of basis functions in $\mathcal{V}_K$ plus a residual: $ U_i = \sum_{k=1}^K S_{ik}\xi_k\otimes\xi_k + R_{K,i}$, with $S_{ik} = \langle U_i, \xi_k\otimes\xi_k\rangle_{L^{2}(\Omega\times\Omega)}$ and symmetric residual $R_{K,i}$, which is orthogonal to $\text{span}\left(\mathcal{V}_{K}\right)$. As such, any $U_i$ can be identified with a $K$-dimensional Euclidean vector $\boldsymbol{s}_i := [S_{i1},...,S_{iK}]^T$. The mapping $U_i\mapsto \boldsymbol{s}_i$ is an isometry between $\left(\text{span}(\mathcal{V}_K), \langle\cdot,\cdot\rangle_{L^{2}(\Omega\times\Omega)}\right)$ and $\left(\mathbb{R}^K, \langle\cdot,\cdot\rangle_2\right)$, where $\langle\cdot,\cdot\rangle_2$ is the standard Euclidean metric. Therefore, we can properly embed the continuous connectivity into a $K$-dimensional Euclidean vector space and utilize a multitude of existing tools from multivariate statistics for analysis and inference. 
\par 
To facilitate powerful embeddings, we construct a $\mathcal{V}_{K}$ that is adapted to the distribution of $U$. This is accomplished utilizing a greedy learning procedure. In particular, given a sample of $N$ realizations of $U_i \sim U$, we iteratively construct $\mathcal{V}_K$ by repeating the following steps 
\begin{equation}\label{eqn:empirical_greedy_algorithm}
\begin{aligned}
        \xi_k &= \sup_{\xi\in \mathbb{S}^{\infty}(\Omega)}N^{-1}\sum_{i=1}^N|\left\langle R_{k-1,i}, \xi\otimes\xi\right\rangle_{L^2(\Omega\times\Omega)}|^2 \\
    \quad R_{k,i} &= U_i - P_{\mathcal{V}_{k}}(U_i), \quad \mathcal{V}_k = \mathcal{V}_{k-1} \cup \{\xi\otimes\xi\}\\ 
\end{aligned}
\end{equation}
for $k=1,...,K$, where $P_{\mathcal{V}_{k}}$ is the $L^2$ orthogonal projection operator onto $\text{span}\left(\mathcal{V}_k\right)$, $\mathbb{S}^{\infty}(\Omega) :=  \{\xi \in L^2(\Omega) : \left\| \xi \right\|_{L^{2}(\Omega)} = 1\}$ and the process is initialized with $R_{0,i} = U_i$ and $\mathcal{V}_0=\emptyset$. In order to understand the theoretical performance of our representation space constructed using the updates in \eqref{eqn:empirical_greedy_algorithm}, Theorem~\ref{thrm:sample_convergence_rate} establishes an asymptotic universal approximation property as a function of the rank.
\begin{theorem}\label{thrm:sample_convergence_rate}
Let $U_1, ..., U_N$ be i.i.d. with $U_i \sim U$. Under some minor assumptions on the distribution of $U$, the asymptotic mean residual $L^2$ error resulting from the greedy algorithm defined in \eqref{eqn:empirical_greedy_algorithm} is bounded as  
$$
\lim_{N\rightarrow\infty} N^{-1}\sum_{i=1}^N \left\|R_{K,i}\right\|_{L^2(\Omega\times\Omega)}^2 \le  \frac{B\left(\sum_{k=1}^\infty\sqrt{\rho}_k\right)^2}{K+1}
$$
where $B$ is a finite positive constant related to the smoothness of $U$ and $\rho_k$ is the $k$'th eigenvalue of the covariance operator of $U$.
\end{theorem}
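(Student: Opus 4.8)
The plan is to send $N\to\infty$, which reduces the data-driven recursion in \eqref{eqn:empirical_greedy_algorithm} to a deterministic population greedy scheme, and then to prove a per-step error-decay inequality for that scheme and solve it. First I would set up the population objects. Since each $U_i$ is $L^2$ with $\mathbb{E}[U]=0$, the covariance operator $\mathcal{C}$ of $U$ on $L^2(\Omega\times\Omega)$ is trace class with eigenpairs $(\rho_k,\psi_k)$ and $\sum_k\rho_k=\mathbb{E}\|U\|_{L^2(\Omega\times\Omega)}^2<\infty$. Replacing $N^{-1}\sum_i$ by $\mathbb{E}$ in \eqref{eqn:empirical_greedy_algorithm} defines population atoms $\xi_1,\xi_2,\dots$, population residuals $R_k := U-P_{\mathcal{V}_k}(U)$, and $a_k := \mathbb{E}\|R_k\|_{L^2(\Omega\times\Omega)}^2$. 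Under the stated regularity hypotheses on the law of $U$ (a moment bound rendering the class $\{\langle\,\cdot\,,\xi\otimes\xi\rangle^2:\xi\in\mathbb{S}^{\infty}(\Omega)\}$ Glivenko--Cantelli), the empirical objective at each stage converges uniformly in $\xi$ to its expectation; combined with compactness of the covariance operators of the residuals (so the suprema are attained and the maximizers are stable), this lets the selected atoms and residuals be passed to their population limits, giving $N^{-1}\sum_i\|R_{K,i}\|^2\to a_K$. It then suffices to bound $a_K$.

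The heart of the argument is a per-step improvement inequality. Appending $\xi_k\otimes\xi_k$ to $\mathcal{V}_{k-1}$ decreases the projection error by at least the squared coefficient of $R_{k-1}$ along that unit atom, so
$$
a_k \;\le\; a_{k-1} - D_k, \qquad D_k := \sup_{\xi\in\mathbb{S}^{\infty}(\Omega)} \mathbb{E}\big[\langle R_{k-1},\xi\otimes\xi\rangle_{L^2(\Omega\times\Omega)}^2\big],
$$
where the greedy rule attains the supremum. I claim $D_k\ge a_{k-1}^2/M$ with $M := B\big(\sum_k\sqrt{\rho_k}\big)^2$. Since $R_{k-1}\perp\text{span}(\mathcal{V}_{k-1})$, one has $a_{k-1}=\mathbb{E}\langle R_{k-1},R_{k-1}\rangle=\mathbb{E}\langle R_{k-1},U\rangle$. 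I would then expand $U$ by Karhunen--Lo\`eve, $U=\sum_j\sqrt{\rho_j}\,Z_j\,\psi_j$ with $Z_j$ uncorrelated of unit variance, and expand each symmetric eigenkernel spectrally as an operator on $L^2(\Omega)$, $\psi_j=\sum_l\mu_{jl}\,\phi_{jl}\otimes\phi_{jl}$ with $\{\phi_{jl}\}_l$ orthonormal in $L^2(\Omega)$ and $\sum_l\mu_{jl}^2=1$. Substituting and applying Cauchy--Schwarz over the probability space, each term obeys $|\mathbb{E}[Z_j\langle R_{k-1},\phi_{jl}\otimes\phi_{jl}\rangle]|\le\sqrt{\mathbb{E}\langle R_{k-1},\phi_{jl}\otimes\phi_{jl}\rangle^2}\le\sqrt{D_k}$, because $\|\phi_{jl}\|_{L^2(\Omega)}=1$ makes $\phi_{jl}$ admissible in the supremum defining $D_k$. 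Hence $a_{k-1}\le\sqrt{D_k}\sum_j\sqrt{\rho_j}\sum_l|\mu_{jl}| = \sqrt{D_k}\sum_j\sqrt{\rho_j}\,\|\psi_j\|_{*}$, where $\|\psi_j\|_{*}$ is the nuclear norm of the kernel $\psi_j$. The smoothness assumption on $U$ is exactly what is needed to make these uniformly bounded, $\sup_j\|\psi_j\|_{*}\le\sqrt{B}$ (smooth sample paths make $\mathcal{C}$ a smoothing operator, so its eigenfunctions have rapidly decaying operator spectra), whence $a_{k-1}\le\sqrt{B\,D_k}\sum_j\sqrt{\rho_j}$, i.e.\ $D_k\ge a_{k-1}^2/M$.

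It then remains to solve the recursion. Combining, $a_k\le a_{k-1}(1-a_{k-1}/M)$, and $a_0=\mathbb{E}\|U\|^2=\sum_k\rho_k\le M$ because $B\ge1$ and $\sum_k\rho_k\le(\sum_k\sqrt{\rho_k})^2$. Setting $b_k:=M/a_k$ and using $1/(1-x)\ge1+x$ on $[0,1)$, this gives $b_k\ge b_{k-1}+1$ with $b_0\ge1$, hence $b_K\ge K+1$ and therefore $a_K\le M/(K+1)$; with the limit identity from the first step this is the assertion.

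The main obstacle is the per-step inequality $D_k\ge a_{k-1}^2/M$, and within it the most delicate point is the \emph{uniform} control $\sup_j\|\psi_j\|_{*}<\infty$ coming from the smoothness of $U$ — this is precisely where the constant $B$ enters and must be argued with care. The secondary technical burden is the $N\to\infty$ passage: because each residual $R_{k-1,i}$ is built from previously selected, sample-dependent atoms, the limit requires a uniform law of large numbers over the dictionary together with stability of the argmax, which is what the unstated ``minor assumptions on the distribution of $U$'' are there to supply.
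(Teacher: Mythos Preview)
Your argument is correct and follows the same classical orthogonal-greedy convergence template as the paper's proof: the Pythagorean recurrence $\|R_k\|^2 \le \|R_{k-1}\|^2 - |\langle R_{k-1},\xi_k\otimes\xi_k\rangle|^2$, a Cauchy--Schwarz step tying the residual norm to the greedy gain, and then solving the resulting scalar recursion. Two substantive differences are worth flagging.

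First, the order of the $N\to\infty$ limit. You pass to the population scheme at the outset, which obliges you to invoke a Glivenko--Cantelli property and argmax stability so that the data-driven atoms converge to their population counterparts. The paper instead runs the entire bound for fixed $N$: it expands each $U_i$ in a \emph{fixed} rank-one dictionary $\{\xi_j\otimes\xi_j\}$ (whose existence is the smoothness assumption), applies Cauchy--Schwarz over the empirical measure, and obtains
\[
\mathbb{E}_N\|R_{k,i}\|^2 \;\le\; \mathbb{E}_N\|R_{k-1,i}\|^2\Bigl(1-\frac{\mathbb{E}_N\|R_{k-1,i}\|^2}{\bigl(\sum_j[\mathbb{E}_N S_{i,j}^2]^{1/2}\bigr)^2}\Bigr),
\qquad S_{i,j}:=\langle U_i,\xi_j\otimes\xi_j\rangle .
\]
Because the $S_{i,j}$ involve only the fixed dictionary and not the sample-dependent greedy atoms, the ordinary SLLN applied to the scalar denominator suffices to pass to the limit. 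This avoids the uniform-convergence machinery you identified as the ``secondary technical burden.''

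Second, the form of the smoothness hypothesis. You decompose each covariance eigenfunction $\psi_j$ in its own Mercer basis and assume $\sup_j\|\psi_j\|_*\le\sqrt{B}$. The paper instead posits a single orthonormal system $\{\xi_j\}\subset L^2(\Omega)$ with $\psi_k=\sum_j c_{kj}\,\xi_j\otimes\xi_j$ and $\sum_j|c_{kj}|\le B_1$ for every $k$; combined with a bound on $\sum_j\mathbb{E}[S_j^2]^{1/2}$ (their Proposition~S2) this yields the same constant $B_1^2(\sum_k\sqrt{\rho_k})^2$. Your assumption is the more natural and slightly more general one (no simultaneous diagonalizability required); the paper's is tailored so that the finite-$N$ argument above goes through cleanly with sample-independent coefficients. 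The final recursion is dispatched by case-splitting induction in the paper and by your reciprocal substitution $b_k=M/a_k$; the two are equivalent.
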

Our functional greedy learning procedure \eqref{eqn:empirical_greedy_algorithm} facilitates efficient computation in multiple ways. It requires solving an optimization problem over $L^2(\Omega)$, opposed to the full space $L^{2}(\Omega\times\Omega)$. Optimization for functions directly in $L^{2}(\Omega\times\Omega)$ square the number of unknown parameters compared to $L^2(\Omega)$, a manifestation of the curse of dimensionality. Additionally, in practice, we are able to utilize the smoothness of elements in $L^2(\Omega)$ to further reduce the dimensionality using basis expansion over $\mathbb{S}^2$.
\par\bigskip
\noindent{\textbf{Deriving the Optimization Problem}: The optimization problem in \eqref{eqn:empirical_greedy_algorithm} is intractable since the search space $\mathbb{S}^{\infty}(\Omega)$ is infinite-dimensional. In practice, we address this by approximating the infinite dimensional parameter $\xi_k$ using basis expansion. Let $\boldsymbol{\phi}_{M_{d}} = (\phi_1, ..., \phi_{M_{d}})^\intercal$ be the spherical splines basis of degree 1 defined over spherical Delaunay triangulation $\mathcal{T}_d$ \citep{lai2007} for $d=1,2$. We form the approximation:}
 $\xi_{k}(\omega) = \boldsymbol{c}_{1,k}^\intercal\boldsymbol{\phi}_{M_{1}}(\omega)\mathbb{I}\{\omega\in\mathbb{S}^2_1\} + \boldsymbol{c}_{2,k}^\intercal\boldsymbol{\phi}_{M_{2}}(\omega)\mathbb{I}\{\omega\in\mathbb{S}^2_2\}$,
where the $\boldsymbol{c}_{d,k} \in \mathbb{R}^{M_{d}}$ are the vectors of coefficients with respect to the basis $\boldsymbol{\phi}_{M_{d}}$. Denote the collected vector of coefficients $\boldsymbol{c}_k = (\boldsymbol{c}_{1,k}^\intercal,\boldsymbol{c}_{2,k}^\intercal)^\intercal$ and basis functions $\boldsymbol{\phi}_M = (\boldsymbol{\phi}_{M_{1}}^\intercal,\boldsymbol{\phi}_{M_{2}}^\intercal)^\intercal$, where $M = M_{1}+M_{2}$. 
\par 
For tractable computation of the $L^2(\Omega\times\Omega)$ inner product, we form a discrete approximation using a dense grid of $n$ points in $\Omega$, denoted as $\boldsymbol{X}$. Define $\boldsymbol{\Phi}\in\mathbb{R}^{n \times M}$ to be the matrix of evaluations of $\boldsymbol{\phi}_{M}$ over $\boldsymbol{X}$ and denote the matrix of $L^2(\Omega)$ inner products of $\boldsymbol{\phi}_{M}$ as $\boldsymbol{J}_{\boldsymbol{\phi}}\in\mathbb{R}^{M\times M}$, with $ \langle\phi_{i},\phi_{j}\rangle_{L^{2}(\Omega)}=0$ if splines $\phi_{i}$ and $\phi_{j}$ are not on the same copy of $\mathbb{S}^2$. The $k$'th discretized residual for the $i$'th subject is approximately $\boldsymbol{R}_{k,i} = \boldsymbol{Y}_i - \sum_{j=1}^{k-1} s_{ij}(\boldsymbol{\Phi}\boldsymbol{c}_j)(\boldsymbol{\Phi}\boldsymbol{c}_j)^\intercal$, where $\boldsymbol{Y}_i$ is the mean-centered KDE estimate of the continuous connectivity over $\boldsymbol{X}$ for the $i$'th subject, the $\boldsymbol{c}_{j}$'s are from the previous $k-1$ selections and 
$
s_{ij} = n^{-2}\left\langle\boldsymbol{R}_{j-1,i},(\boldsymbol{\Phi}\boldsymbol{c}_j)(\boldsymbol{\Phi}\boldsymbol{c}_j)^\intercal\right\rangle_{F}.
$
Introducing the auxiliary variable $\boldsymbol{s}=(s_1,...,s_{N})^\intercal\in\mathbb{R}^{N}$, we propose the following discretized formulation of the optimization problem in \eqref{eqn:empirical_greedy_algorithm}:
\begin{equation}\label{eqn:empirical_optm_reformulated_discretized_AO}
\begin{aligned}
    \hat{\boldsymbol{c}}_k = &\underset{\boldsymbol{c} \in \mathbb{R}^{M}}{\text{ argmax }}\sum_{i=1}^N s_i\left\langle\boldsymbol{R}_{k-1,i}, \boldsymbol{\Phi}\boldsymbol{c}\otimes\boldsymbol{\Phi}\boldsymbol{c}\right\rangle_{F} -  \alpha_1 \boldsymbol{c}^\intercal\boldsymbol{Q}_{\boldsymbol{\phi}}\boldsymbol{c} \\
    \textrm{s.t.} \quad & \boldsymbol{c}^\intercal\boldsymbol{J}_{\phi}\boldsymbol{c} = 1, \boldsymbol{c}^\intercal\boldsymbol{J}_{\phi}\boldsymbol{c}_j  = 0 \text{ for }j=1,2,...,k-1 \\
    &\quad \left\|\boldsymbol{c}\right\|_{0} \le \alpha_{2}, \quad s_i = \langle \boldsymbol{R}_{k-1,i}, \left(\boldsymbol{\Phi}\boldsymbol{c}\right)\otimes\left(\boldsymbol{\Phi}\boldsymbol{c}\right)\rangle_{F}, \text{ for } i=1,...,N.
\end{aligned}
\end{equation}
where $\boldsymbol{Q}_{\boldsymbol{\phi}}$ is a matrix encoding the quadratic variation of the candidate solution $\boldsymbol{c}$ and is used to control the ``roughness'' of the basis functions via tuning parameter $\alpha_1 > 0$. For certain applications it may be of interest to promote the estimation of locally supported $\xi_k$, e.g., in order to localize the effect of the $k$'th basis function on the cortical surface. Owing to the local support of the spherical spline basis functions, this can be done by employing an $l_0$ constraint to encourage sparsity in the $\boldsymbol{c}_k$'s, with $\alpha_2$ controlling the degree of localization. Efficient algorithms exist for evaluating spherical splines, computing their directional derivatives as well as performing integration \cite{lai2007}. As a result, the matrices $\boldsymbol{J}_{\boldsymbol{\phi}}$ and $\boldsymbol{Q}_{\boldsymbol{\phi}}$ can be constructed cheaply.
\par\bigskip
\noindent{\textbf{Algorithm}: We apply an alternating optimization (AO) scheme to form an approximate solution to the problem \eqref{eqn:empirical_optm_reformulated_discretized_AO}, iteratively maximizing $\boldsymbol{c}$ given $\boldsymbol{s}$ and vice versa. Denote the singular value decomposition $\boldsymbol{\Phi} = \boldsymbol{U}\boldsymbol{D}\boldsymbol{V}^\intercal$, and let 
$\boldsymbol{G}_{k,i} := \boldsymbol{U}^\intercal\boldsymbol{R}_{k,i}\boldsymbol{U}$ and define the tensor $\mathcal{G}_{k}$ to be the mode-3 stacking of the $\boldsymbol{G}_{k,i}$.
Under the AO scheme, the update for the block variable $\boldsymbol{c}_k$ at iteration  $t+1$ is given by 
\begin{equation}\label{eqn:ctilde_update}
\begin{aligned}
    \boldsymbol{c}^{(t+1)} = \text{max} &\quad \boldsymbol{c}^\intercal\left[\boldsymbol{V}\boldsymbol{D}\left(\boldsymbol{I} - \boldsymbol{P}_{k-1}\right)\left[\mathcal{G}_{k-1}\times_3\boldsymbol{s}^{(t)} - \alpha_1\boldsymbol{D}^{-1}\boldsymbol{V}^\intercal\boldsymbol{Q}_{\boldsymbol{\phi}}\boldsymbol{V}\boldsymbol{D}^{-1}\right]\left(\boldsymbol{I} - \boldsymbol{P}_{k-1}^{\intercal}\right)\boldsymbol{D}\boldsymbol{V}^\intercal\right]\boldsymbol{c} \\
    & \textrm{s.t.} \quad \boldsymbol{c}^\intercal\boldsymbol{J}_{\boldsymbol{\phi}}\boldsymbol{c} = 1, \quad \left\|\boldsymbol{c}\right\|_{0} \le \alpha_{2}, \\
\end{aligned}
\end{equation}
where $\times_d$ denotes the $d$-mode tensor-matrix multiplication and  \begin{equation}\label{eqn:update_P}
        \boldsymbol{P}_{k-1} = \boldsymbol{D}\boldsymbol{V}^\intercal\boldsymbol{C}_{k-1}\left[\boldsymbol{C}_{k-1}^{\intercal}\boldsymbol{J}_{\boldsymbol{\phi}}\boldsymbol{C}_{k-1}\right]^{-1}\boldsymbol{C}_{k-1}^{\intercal}\boldsymbol{J}_{\boldsymbol{\phi}}\boldsymbol{V}\boldsymbol{D}^{-1}  \quad \boldsymbol{C}_{k-1} := \left[\boldsymbol{c}_1, ..., \boldsymbol{c}_{k-1}\right],
\end{equation}
For $\boldsymbol{s}$, the update at iteration $t+1$ is given in closed form by
\begin{equation}\label{eqn:s_update}
    \boldsymbol{s}^{(t+1)} = \mathcal{G}_{k-1} \times_1 (\boldsymbol{D}\boldsymbol{V}^\intercal\boldsymbol{c}_k^{(t+1)})\times_2 (\boldsymbol{D}\boldsymbol{V}^\intercal\boldsymbol{c}_k^{(t+1)}).
\end{equation}
We apply the fast truncated power iterations method from \cite{yuan2011} to approximate the solution to the sparse eigenvector problem in \eqref{eqn:ctilde_update}. Notice that the tensor $\mathcal{G}_{k}$ is $M\times M \times N$ and thus the complexity of the AO-updates are independent of the number of grid points used in the discretization. Since the computational grid $\boldsymbol{X}$ can be made arbitrarily dense, typically $M \ll n$ which can amount to an enormous savings in computation when compared to an alternative approach of performing tensor decomposition directly on the super high-dimensional discretized continuous connectivity matrices $\boldsymbol{Y}_i$. For example, the tensor decomposition method from \citep{zhang2019} failed when applied directly to the high-resolution connectivity data described in Section~\ref{sec:experiments_and_conclusion} on a HPC cluster with 120GB of RAM.

\section{Experiments and Conclusions}\label{sec:experiments_and_conclusion}

\noindent{\textbf{Dataset Description, Preprocessing, Modeling and Implementation}:
In this work, we use brain imaging data and associated measurements related to various cognitive and psychiatric factors from a sample of 437 female subjects in the HCP young adult cohort, publicly available at \url{https://db.humanconnectome.org}. For each subject, we use both the dMRI and the structural T1-weighted images collected using a 3T Siemens Connectom scanner (Erlangen, Germany). The full imaging acquisition protocol as well as the minimal preprocessing pipeline applied to the dMRI data are given in \cite{glasser2013}. fODFs were estimated from the preprocessed diffusion data using CSD \citep{tournier2007} and then SET \citep{st2018surface} was used to estimate the underlying WM fiber tracts and endpoints on the surface. The T1 images were registered to the dMRI using ANTs \citep{avants2009advanced} and the cortical white surfaces were extracted using Freesurfer. To alleviate the misalignment issue resulting from the joint analysis of $N$ cortical surfaces, we parameterized each using a spherical coordinate via the surface inflation techniques from \cite{fischl1999} and then applied the warping function from freesurfer to bring the endpoints to a template space.}
\par 
To estimate $\mathcal{V}_{K}$, we use a spherical spline basis with $M_1=M_2=410$ over a spherical Delaunay triangulation. In accordance with \cite{cole2021}, the KDE bandwidth was set to be $0.005$ and a grid of $4,121$ points on $\Omega$ was used for discretization. We set $\alpha_1=10^{-8}$ and $\alpha_2=40$ to encourage locally supported basis functions. The rank was selected to be K=$100$ using a threshold criteria on the proportion of variance explained, estimated efficiently by $\left\|\mathcal{G}_{K}\right\|_{F}^2/\left\|\mathcal{G}\right\|_{F}^2$. 
\par 
For comparison, we apply several state-of-the-art network embedding methods, namely Tensor-Network PCA (TN-PCA) \citep{zhang2019}, Multiple Random Dot-Product Graph model (MRDPG) \citep{nielsen2018} and Multiple Adjacency Spectral Embedding method (MASE) \citep{arroyo2021b} to the sample of SC matrices obtained using the Destriex atlas \citep{destrieux2010}. TN-PCA can be applied directly to the streamline endpoint count data, while MRDPG and MASE require a binarization of the connectome matrices. Embedding spaces of K=$100$ were constructed using the publicly available implementations for each method. Analysis was performed using R/4.0.2 and MATLAB/2020b on a Linux machine equipped with a 2.4 GHz Intel Xeon CPU E5-2695 and 120GB of RAM.
\par\bigskip
\noindent{\textbf{Hypothesis Testing}:
We compare the power of the embeddings produced by our continuous connectivity framework, denoted as CC, to the network-embeddings produced by TN-PCA, MDRPG and MASE on a set of two group hypothesis tests. We use a set of 80 measurements of traits spanning the categories of cognition, emotion, and sensory. For each trait, two groups are created by selecting the top 100 and bottom 100 of the 437 HCP females, in terms of their measured trait score. Between-subject pairwise distance matrices were computed in the embedding space for each of the methods and p-values for the two group comparison were formed using the Maximum Mean Discrepancy (MMD) test \citep{gretton2012}. The p-values were corrected for false discovery rate (FDR) control using \cite{benjamini1995}.}
\par 
The four panels in Figure~\ref{fig:global_inference} show the results for each of the embedding methods. The y-axis gives the negative log transformed p-values and the colors indicate significant discoveries under a couple FDR control levels. With a threshold of $\text{FDR}\le 0.05$, the embeddings produced by our method are able to identify 22 significant discoveries, compared to 7 or less for the competitors. These results suggests
potentially large gains in statistical power for detecting group differences when modeling the SC data at much higher resolutions than are commonly being used currently.
\begin{figure}
    \centering
    \includegraphics[scale=0.4]{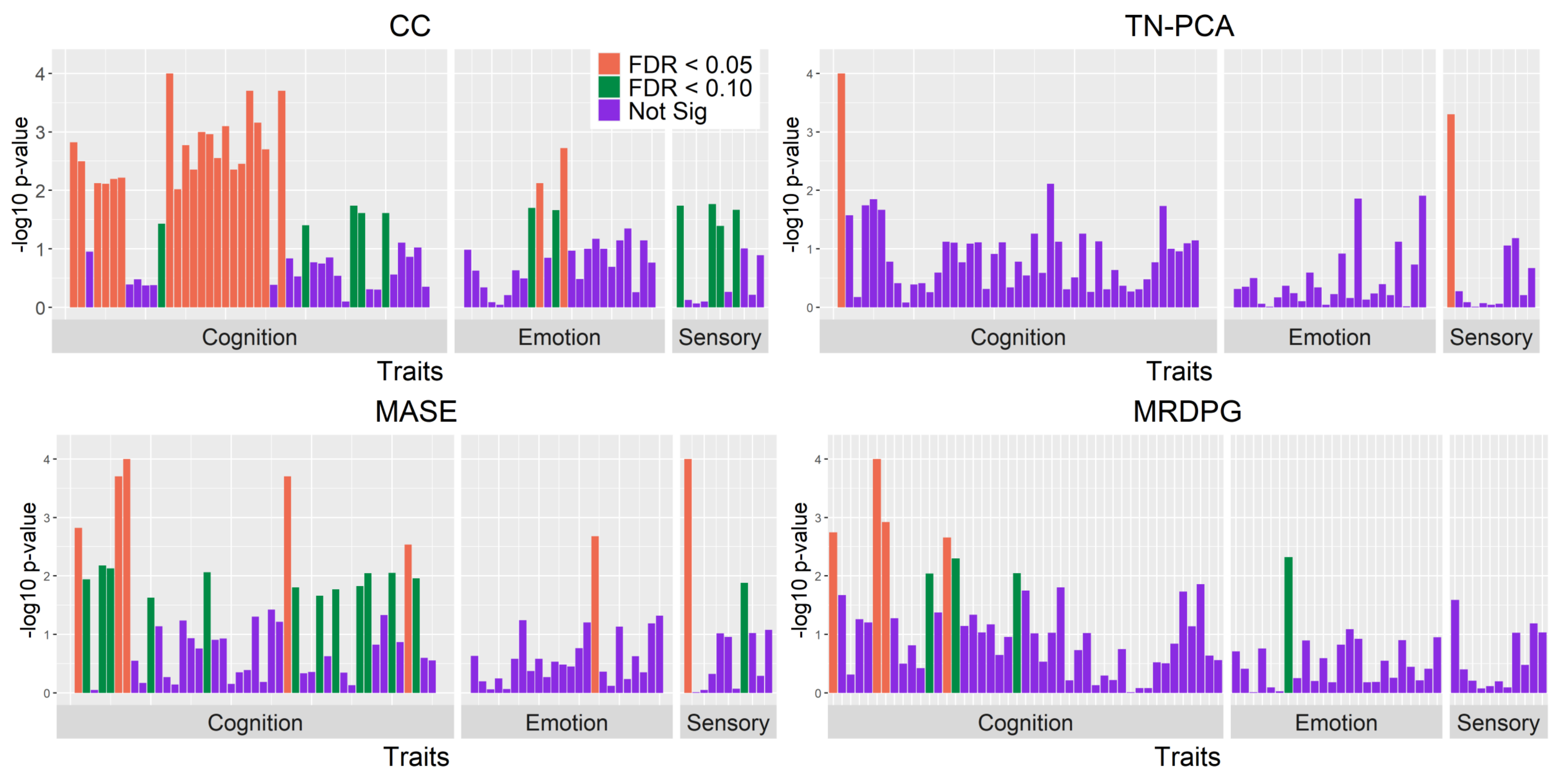}
    \caption{Results from the two-group hypothesis tests for the collection of cognition, emotion and sensory traits. The y-axis gives the $-\text{log}$-transformed p-values.}
    \label{fig:global_inference}
\end{figure}
\par
\noindent{\textbf{Continuous Subnetwork Discovery}: 
We now demonstrate how our method can be utilized for atlas-independent identification of brain regions and connectivity patterns that are related to traits of interest. We consider the cognitive trait \textit{delay discounting}, which refers to the subjective decline in the value of a reward when its arrival is delayed in time and is measured for HCP subjects by the subjective value of $\$200$ at 6 months \cite{estle2006}. We define the steep and low discounting groups using thresholds on this subjective value of $\le \$40$ and $\ge\$160$, respectively. Sub-selecting individuals from our sample who met these criteria resulted in a total sample size of 142: 64 in the steep discounting group and 78 in the low discounting group.  We test for group differences in each embedding dimension using univariate permutation tests on the associated embedding coefficients. 10,000 permutations were used to compute the empirical p-values for each test. A Bonferroni-Holm correction was applied to the p-values to ensure family-wise error rate control $\le 0.10$.}
\begin{figure}
    \centering
    \includegraphics[scale=0.4]{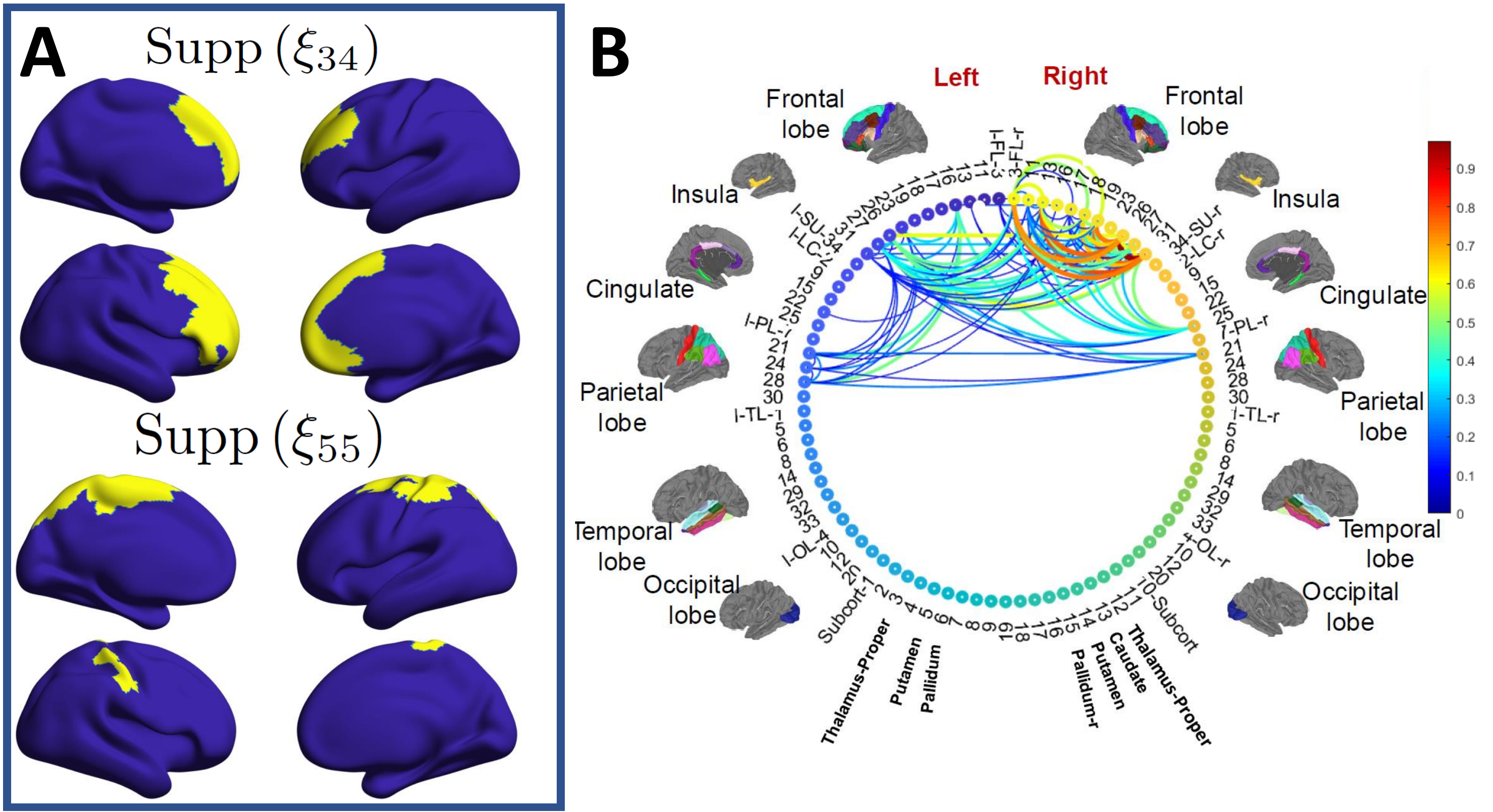}
    \caption{A: Support sets (yellow) of the two basis functions found to be significantly associated with connectivity differences in the delay discounting groups. B: Top $50\%$ of the edges in $\boldsymbol{A}^{CC}$.}
    \label{fig:subnet_select}
\end{figure}
\par 
The k=$34$ and k=$55$ embedding dimensions were found to be significantly different between steep and low discounting groups. Figure~\ref{fig:subnet_select} A. shows the non-zero support sets: $\text{Supp}(\xi_k):=\{\omega\in\Omega:\xi_k(\omega)\neq 0\}$, of the selected basis functions (yellow) plotted on the cortical surface. To visualize the associated connectivity patterns, we coarsened the continuous support sets using integration over the Desikan atlas \citep{desikan2006}. Specially, we define the $(a,b)$-th element in the coarsened adjacency matrix as
 $$
    \boldsymbol{A}_{ab}^{CC} := |E_{a}|^{-1}|E_{b}|^{-1}\int_{E_{a}}\int_{E_{b}}\sum_{k\in\{34,55\}}\mathbb{I}\{\xi_{k}(\omega_1)\xi_{k}(\omega_2) \neq 0\}d\omega_1d\omega_2
$$
where $E_{a},E_{b}\in\Omega$ are any two parcels from the Desikan atlas, $\mathbb{I}$ is the indicator function and $|E_{a}|$ denotes the surface measure. Figure~\ref{fig:subnet_select} B. provides a circular network plot showing the top $50\%$ of the edges in $\boldsymbol{A}_{ab}^{CC}$. Together, these views indicate the affected brain regions and connectivity patterns are predominantly within and between areas in the left and right frontal lobes and the left parietal lobe. Previous studies have shown these areas to be strongly related to delay discounting \citep{olson2009,owens2017,sebastian2014}.
\par\bigskip
\noindent{\textbf{Conclusion}:
This work introduces a novel modeling framework for structural connectivity using latent random functions defined on the product space of the cortical surface. To facilitate tractable representation and analysis, we formulate a data-adaptive reduced-rank function space that avoids the curse of dimensionality while also retaining a reasonable convergence rate. To form a basis for this space, we derive a penalized optimization problem and propose a novel computationally efficient algorithm for estimation. Experiments on real data from the HCP demonstrate that our method is able to produce more powerful representations for group-wise inference from SC data compared with traditional atlas-based approaches and can be used to localize group differences to brain regions and connectivity patterns on the cortical surface.}

\bibliographystyle{plainnat}
\bibliography{references}

\end{document}